\documentclass{llncs}

\usepackage{lmodern}

\usepackage[T1]{fontenc}
\usepackage{amsmath}
\usepackage{amssymb}
\usepackage[utf8]{inputenc}
\usepackage{graphicx}
\usepackage{microtype}

\newcommand{\junk}[1]{}

\usepackage{algorithm}

\usepackage{algorithmic}

\title{Computing the Boolean product
of two $n\times n$ Boolean matrices using
$O(n^2)$ mechanical operations}

\date{}

\author{Andrzej Lingas \inst{1}
%\thanks{Research supported in part by Swedish Research Council
%grant 621-2011-6179.}
\and Mia Persson \inst{2}}

\institute{Department of Computer Science, Lund University\\
\email{Andrzej.Lingas@cs.lth.se}
\and Department of Computer Science, Malm\"o University\\
\email{Mia.Persson@mau.se}}

\begin{document}
\maketitle

\begin{abstract}
We study the problem of determining
the Boolean product
of two $n\times n$ Boolean matrices
in an unconventional computational
model allowing for mechanical operations.
We show that $O(n^2)$ operations are sufficient
to compute the product in this model.
\end{abstract}
\par
\noindent
{\bf Keywords}:
%\begin{keyword}
Boolean matrix multiplication, Boolean matrix-vector multiplication,
mechanical computing, time complexity
%\end{keyword}
%\par
%\noindent
%{\bf Keywords}: semi-disjoint bilinear form, semi-ring,
%vector convolution, matrix multiplication, distance product,
%Boolean circuit complexity, unit-cost RAM,
%time complexity.
%\vfill
%\newpage

\section{Introduction}

The problem of Boolean matrix multiplication has a wide range of
fundamental applications, for instance, in graph algorithms (see,
e.g., \cite{Z02}).  The Boolean product of two Boolean $n\times n$
matrices can be computed using $O(n^3)$ Boolean AND and OR binary
operations following its definition. This is optimal if only these two
types of operations are allowed \cite{MG76,Pa75,P75}.  If also the one argument NOT
operation is allowed then the product can be computed using
$O(n^{\omega})$ operations, where $\omega$ stands for the exponent of
the so called fast arithmetic matrix multiplication.  The
$O(n^{\omega})$ bound follows by a straightforward reduction of
Boolean matrix multiplication to the arithmetic one for $0-1$ matrices.
Already five decades ago, 
Strassen presented an algorithm for arithmetic matrix multiplication
breaking out of the standard method and
using only $O(n^{2.8074})$ multiplications, additions and subtractions \cite{S69}.
The following series of improvements of the exponent
$\omega$ of fast matrix multiplication culminates
in the recent results of
Le Gall and Vassilevska Williams showing
$\omega < 2.373$
\cite{LG14,Vassilevska12}. On the other hand, Alman
et al. have recently shown that there is an $\ell > 2$
such that one cannot derive
an upper bound on $\omega$ better than $\ell$
by using the known approaches \cite{AV18}.
Also, the substantially sub-cubic algebraic algorithms
for matrix multiplication have very large overhead
which makes them hardly practical. Unfortunately,
no the so called combinatorial (i.e., with small constants)
algorithm for Boolean matrix multiplication
running in $O(n^{3-\epsilon})$ time, for a fixed $\epsilon > 0,$
is known \cite{BW12,VWW}.
The fastest known combinatorial algorithm
for Boolean matrix multiplication due to Yu runs in 
$O(n^3 poly(\log \log n)/ \log^4n))$ time \cite{Yu15}.
Hence, a natural question arises about the complexity of
Boolean matrix multiplication in different models 
of computation, possibly more powerful or unconventional.

To begin with, if one uses huge integers, requiring
$O(n\log n)$ bits then the Boolean product of two $n\times n$ Boolean
matrices can computed in $O(n^2)$ steps on RAM as observed in \cite{PS16}.
Recently, also a nondeterministic algorithm for $n\times n$ matrix
multiplication using $O(n^2)$ arithmetic operations has been presented
by Korec and Wiedermann in \cite{KW14}.  It results from a
derandomization of Freivalds' $O(n^2)$-time randomized algorithm for
integer matrix product verification \cite{F77}. Simply, the algorithm guesses
first the product matrix and then verifies its correctness. Again, the
derandomization involves huge numbers requiring $O(n)$ times more bits
than the input numbers \cite{KW14}. More recently, Wiedermann has
presented two further, slightly slower nondeterministic algorithms for
matrix multiplication, both running in $O(n^2\log n)$ time and relying
on the derandomization of Freivalds' algorithm. The first runs on a
real RAM, the second on a unit-cost RAM using only integers of size
proportional to that of the largest entry in the input matrices
\cite{W14}. On the other hand, no quantum algorithm for Boolean matrix
product faster than those based on fast algebraic matrix
multiplication in the standard computational model has been 
devised so far \cite{{LeGallisaac12}}.

From the perspective of modern electronic computing, mechanical
computing seems not only old-fashioned but even unconventional. The
history of mechanical devices assisting computing stretches several
thousands years back, e.g., the so called Roman abacus was used as
early as 2700-2300 B.C. in Babylonia \cite{Wik}.  Under several
thousands of years mechanical devices assisted astronomical and
navigation calculations.  The slide-rule has been used since 1620 and
mechanical calculators were intensively developed in 17, 18 and 19
centuries (e.g., by Pascal, Leibniz and Babbage).

In this paper, we study the problem of determining the Boolean product
of two $n\times n$ Boolean matrices in an unconventional computational
model allowing for mechanical operations.  We show that the Boolean
product of an $n\times n$ Boolean matrix with an $n$-dimensional
Boolean column vector can be computed using $O(n)$ operations.  Hence,
we can infer that $O(n^2)$ operations are sufficient to compute the
Boolean product of two $n\times n$ Boolean matrices in this model.
For smaller matrices the mechanical operations can be performed even
manually while for the larger ones the operations can be automatized
using electric power.  Our result demonstrates that already ancient
civilizations had sufficient technology to perform relatively fast
matrix multiplication of moderately large Boolean matrices.
To the best of our knowledge, we are not familiar with any
prior mechanical approach to (Boolean) matrix multiplication.

\subsection{Basic ideas}

Our mechanical algorithm for Boolean matrix multiplication computes
the Boolean product of the two input $n\times n$ Boolean matrices by
repetitive multiplication of the first matrix with consecutive columns
of the second matrix.  While computing the Boolean product of a fixed
Boolean $n\times n$ matrix with a Boolean column vector one encounters
two major bottlenecks that prohibit obtaining a substantially
sub-quadratic in $n$ number of operations in the standard
computational model (solely poly-logarithmic speedups are known after
preprocessing \cite{RW}).  Note that only the $j$-th columns of the
fixed matrix where the $j$-th coordinate of the input column vector is
$1$ can affect the output product vector. The first bottleneck is the
selection of the aforementioned columns (alternatively, activating
them and deactivating the remaining ones). The other bottleneck is the
computation of the disjunction of entries belonging to active columns
for each row of the fixed matrix. We design a special purpose
mechanical processor for computing the Boolean product of a Boolean
$n\times n$ matrix with a Boolean $n$-dimensional column vector that
solves each of the two bottleneck problems using $O(n)$ mechanical
operations.
\junk{\footnote{Any substantially subquadratic 
in $n$ solution to the bottleneck problems
(even allowing for a subcubic in $n$ preprocessing of the matrix)
in a standard sequential model
would lead to substantially subcubic in $n$ combinatorial
algorithm for Boolean matrix multiplication,  which would
be a breakthrough.}.}
The matrix-vector processor (MVP) needs also to perform a
number of other operations, e.g., reading the input vector or the
input matrix or outputting the product vector etc., that can be
implemented purely mechanically (if one insists on that) in many ways.
We leave here the details to the reader.

\section{Requirements on a processor for matrix-vector product}
We shall use a special purpose processor for computing
the Boolean product of a Boolean $n\times n$ matrix with a Boolean
$n$-dimensional column vector. We shall term such a processor MVP for short. In
particular, an MVP has an input $n \times n$ array part, an input
$n$-dimensional vector part, and an output $n$-dimensional vector
part, see Fig. 3.
% \ref{fig: parts}. 
In this section, we shall provide solely general requirements
that an MVP should satisfy.  In Section 4, we shall outline
implementations of MVP including mechanical operations that allow for
fulfillment of these requirements. The list of requirements is as follows.

\begin{enumerate}
\item A Boolean $n\times n$ matrix can be read into the MVP input
array using $O(n^2)$ operations.
\item A Boolean $n$-dimensional vector can be read into
the MVP input vector using $O(n)$ operations.
\item The MVP output vector can be reported using $O(n)$ operations.
\item All $j$-th columns of the MVP input array can be
appropriately activated/deactivated (if necessary) such that afterwards
the $j$-th column is active if and only if the $j$-th
coordinate of the MVP input vector is set to $1$,
using totally $O(n)$ operations.
%\item All active columns of the MVP input array
%can be desactivated using totally $O(n)$ operations.
\item For all $i$-th rows of the MVP input array that 
include at least one $1$ belonging to an active
column of the array the $i$-th coordinate
of the MVP output vector can be set to $1$ while
all other coordinates of the output vector
can be set to $0$  using
totally $O(n)$ operations.
\item The MVP output
vector can be reset to an initial state using
$O(n)$ operations.
\end{enumerate}

\section{The algorithms for Boolean matrix-vector and matrix-matrix products}

In this section, we present two simple mechanical algorithms for
Boolean matrix-vector product and Boolean matrix-matrix product,
respectively.  Both are based on the use of an MVP. It will be
convenient to refer to the first algorithm as a functional procedure
$Matvec(A,V)$ in order to facilitate its repetitive use in the second
algorithm.

%\begin{figure}
\begin{figure}[t]
\centering
\fbox{\begin{minipage}[t]{0.90\textwidth}
\begin{algorithmic}[1]
%\begin{algorithmic}[1]
\REQUIRE a Boolean $n\times n$ matrix $A$ and
a Boolean $n$-dimensional column vector $V$ 
in the MVP input array and the MVP input vector,
respectively.
\ENSURE the Boolean column vector product of
$A$ and $V.$
\FOR {$j = 1$ {\bf to} $n$}
%\BEGINFOR
%\STATE desactivate the the $j$-th
%column of the input array
\STATE {\bf if} $V[j]=1$ and the $j$-th
column of the MVP input array is not active {\bf then} activate the $j$-th
column
\STATE {\bf if} $V[j]=0$ and the $j$-th
column of the MVP input array is active {\bf then} deactivate the $j$-th
column
\ENDFOR
\FOR {$i = 1$ {\bf to} $n$}
\STATE {\bf if} the $i$-th row of the MVP input array includes
at least one $1$ that belongs to an active
column of the array {\bf then} set the $i$-th
coordinate of the MVP output vector to $1$
{\bf else} set this coordinate of the MVP output vector to $0$
%move the ladder over the $i$-th row of the input array
%to the right as much as possible
\ENDFOR
\RETURN the output vector
\end{algorithmic}
\end{minipage}}
\caption{The functional procedure Matvec(A,V)}
\label{fig: algo1}
\end{figure}

\begin{lemma}\label{lem: matvec}
The procedure Matvec(A,V) computes the
Boolean column vector product of $A$
and $V$ correctly, using $O(n)$ operations.
\end{lemma}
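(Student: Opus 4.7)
The plan is to separate the argument into a correctness part and an operation-count part, handling each by appealing directly to the numbered requirements from Section 2 that the MVP is assumed to satisfy.

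For correctness, I would first observe that after the first \textbf{for} loop terminates, the state of the MVP input array satisfies the invariant that the $j$-th column is active if and only if $V[j]=1$, for every $j\in\{1,\dots,n\}$. This follows by inspecting the two conditional statements in the loop body: they cover exactly the two cases in which the activation status of column $j$ disagrees with $V[j]$, and leave the array untouched otherwise. Given that invariant, I would then analyze the second \textbf{for} loop: by the behavior specified in line 6, the $i$-th coordinate of the output vector is set to $1$ exactly when some index $j$ has both $A[i][j]=1$ and column $j$ active, equivalently when there exists $j$ with $A[i][j]\wedge V[j]=1$. That is precisely the $i$-th coordinate of the Boolean product $AV$, so the returned vector equals $AV$.

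For the operation count, I would not charge the loops in the naive iteration-by-iteration way; instead I would invoke the MVP requirements that bundle the whole batch of operations into an $O(n)$ cost. Concretely, the first \textbf{for} loop implements exactly the column-activation task described in Requirement~4, which by hypothesis can be performed in $O(n)$ mechanical operations in total. The second \textbf{for} loop implements exactly the row-scan-and-output task of Requirement~5, again $O(n)$ operations in total. Finally, returning the output vector is covered by Requirement~3 at cost $O(n)$. Adding the three $O(n)$ contributions yields the claimed $O(n)$ bound.

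The only subtle point — and the one I would want to flag explicitly — is the mismatch between the pseudocode, which is written as a pair of explicit loops suggesting $\Theta(n)$ iterations of constant cost, and the MVP model, where the whole activation update or the whole row-scan is realized as a single collective mechanical action whose aggregate cost is $O(n)$. I would make clear that the loops in Fig.~1 are a description of the intended logical effect rather than a sequence of independently executed steps, and that the actual $O(n)$ bound is inherited from Requirements~3--5. Once that is noted, no further work is needed, since the MVP realizations justifying those requirements are deferred to Section~4.
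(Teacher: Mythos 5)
Your proposal matches the paper's own proof: correctness is argued from the same two invariants (column $j$ is active iff $V[j]=1$, and the $i$-th output coordinate is set to $1$ iff row $i$ has a $1$ in an active column), and the $O(n)$ bound is charged to Requirements~4 and~5 exactly as the paper does. Your explicit remark that the loops describe a logical effect whose aggregate cost is bounded by the MVP requirements, rather than $n$ independently charged iterations, is a worthwhile clarification but does not change the argument.
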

\begin{proof}
The correctness of the procedure follows from
the following two facts. 
\begin{enumerate}
\item
The $i$-th coordinate of the MVP output vector is set
to $1$ if and only if the $i$-th row of
the MVP input array includes at least one $1$
belonging to an active column of the array.
%The ladder covering the $i$-th
%row of the input array cannot be shifted enough
%to the right to change the $i$-th coordinate
%of the output vector from $1$ to $0$ if
%and only if there is at least $1$ in the $i$-th
%row that belongs also to an active column of
%the input array. 
\item The $j$-th
column of the MVP input array is active if and only
if the $j$-th coordinate of the MVP input vector
is set to $1.$
\end{enumerate}
The upper bound on the number of operations
necessary to implement $Matvec(A,V)$   follows from
the requirements on an MVP (see the previous section).
In particular the upper bound on the total
number of operations necessary to implement 
the first loop  follows from Requirement 4 while that for
the second loop from Requirement 5. 
\junk{
If columns of the input array are activated separately
than Step 1 requires at most $n$ activation operations
and $O(n)$ walking steps to scan the input vector
and activate the appropriate columns. 
If we use a more refined mechanical processor then
the activation of the columns of the input array 
corresponding to $1$ in the input vector 
be even done just by lowering the representation
of the input vector appropriately using $O(1)$
operations. 
Step 2 requires $n$ ladder movements and 
$O(n)$ walking steps.}
\qed
\end{proof}

By repetitively applying $Matvec(A,B[*,j])$
to consecutive columns $B[*,j])$ of the
second input Boolean matrix $B,$ we obtain our
algorithm (Algorithm 1) for Boolean matrix-matrix product.

%\begin{figure}
\begin{figure}[t]
\centering
\fbox{\begin{minipage}[t]{0.90\textwidth}
\begin{algorithmic}[1]
%\begin{algorithmic}[1]
\REQUIRE Boolean $n\times n$ matrices $A$ and
$B.$
\ENSURE the Boolean matrix product of
$A$ and $B$
\STATE read the matrix $A$ into the MVP input array
\FOR {$j = 1$ {\bf to}  $n$}
%\BEGINFOR
\STATE read the $j$-th column $B[*,j]$ of $B$ in the MVP input vector
\RETURN $Matvec(A,B[*,j])$
\STATE reset the MVP output vector to their initial state
\ENDFOR
\end{algorithmic}
\end{minipage}}
\caption{The mechanical algorithm for Boolean matrix multiplication (Algorithm 1)}
\label{fig: algo1}
\end{figure}

\begin{theorem}
Algorithm 1 computes the Boolean product of the
matrices $A$ and $B$ correctly, using
$O(n^2)$ operations.
\end{theorem}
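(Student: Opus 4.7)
The plan is to establish correctness and the operation count separately, both reducing quickly to Lemma~\ref{lem: matvec} together with the MVP requirements stated in Section~2.

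For correctness, I would first recall the definition of the Boolean matrix product: the $j$-th column of $A \cdot B$ is exactly the Boolean matrix-vector product $A \cdot B[*,j]$. Thus, to verify that Algorithm~1 outputs $A \cdot B$, it suffices to check that after each iteration of the outer loop the MVP output vector equals $A \cdot B[*,j]$. Since line~1 loads $A$ into the MVP input array and line~3 of the loop loads $B[*,j]$ into the MVP input vector, the preconditions of $Matvec(A, B[*,j])$ are satisfied, so by Lemma~\ref{lem: matvec} its return value is precisely $A \cdot B[*,j]$. The only subtlety is that $A$ is read only once (outside the loop) and the output vector is reset at the end of each iteration; I would note that these do not affect subsequent iterations because the contents of the MVP input array remain unchanged across iterations, while the reset ensures the MVP output vector starts each call in its initial state, matching the setup implicit in Lemma~\ref{lem: matvec}.

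For the operation count, I would enumerate the work: line~1 costs $O(n^2)$ by Requirement~1 (this is the only step outside the loop). Inside the loop, line~3 reads an $n$-dimensional vector using $O(n)$ operations by Requirement~2, the call to $Matvec$ uses $O(n)$ operations by Lemma~\ref{lem: matvec}, outputting/returning the vector uses $O(n)$ operations by Requirement~3, and resetting the output vector uses $O(n)$ operations by Requirement~6. Hence each of the $n$ iterations contributes $O(n)$ operations, for a loop total of $O(n^2)$. Adding the initial $O(n^2)$ from reading $A$ yields $O(n^2)$ overall.

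There is no real obstacle here: the theorem is essentially an immediate corollary of Lemma~\ref{lem: matvec} combined with the six MVP requirements. The only mildly delicate point is bookkeeping, namely making explicit that the $O(n^2)$ cost of reading $A$ is paid only once, so that the per-iteration cost is genuinely $O(n)$ rather than $O(n^2)$; this is exactly why the algorithm is organized to hoist the reading of $A$ out of the loop.
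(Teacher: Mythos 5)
Your proposal is correct and follows essentially the same route as the paper: correctness via the observation that the $j$-th column of $A\cdot B$ is $A\cdot B[*,j]$ together with Lemma~\ref{lem: matvec}, and the $O(n^2)$ operation count by charging line~1 to Requirement~1 once and each of the $n$ iterations $O(n)$ operations via Requirements~2, 3, 6 and Lemma~\ref{lem: matvec}. Your extra remarks on the input array persisting across iterations and the reset restoring the output vector's initial state are reasonable bookkeeping the paper leaves implicit.
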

\begin{proof}
The correctness of the algorithm follows from the fact
that it computes the consecutive columns of
the Boolean product of $A$ and $B$ correctly
by
\newline
 Lemma \ref{lem: matvec}.
The $n$ calls of the procedure $Matvec$ require
totally $O(n^2)$ operations by Lemma \ref{lem: matvec}.
Reading the matrix $A$ in the input array
requires $O(n^2)$ operations by Requirement 1 on an MVP.
Reading the consecutive columns of the matrix $B$
into the MVP input vector 
totally requires $O(n^2)$ operations by Requirement 2.
Finally, resetting the MVP output vectors requires
$O(n)$ operations by Requirement 6.
\qed
\end{proof}

\begin{remark}
Observe that all the requirements on MVP but 
for the bottleneck ones 4 and 5 can be easily
realized up to logarithmic factors
in such standard sequential computational models
as RAM, Turing machine or Boolean circuits. Hence,
if the bottleneck requirements could be
realized in substantially subquadratic 
in $n$ time
(even allowing for a subcubic in $n$ preprocessing of the matrix)
by combinatorial algorithms in the aforementioned standard models, it 
would yield substantially subcubic in $n$ combinatorial
algorithms for Boolean matrix multiplication,  which would
be a breakthrough (cf. \cite{Yu15,RW}).
\end{remark}

\section{A mechanical MVP}

In this section, we outline an idea of implementation of an MVP with
help of mechanical operations.

Each column of the MVP input array is modelled by an axis with
attached equal size segments representing consecutive column
entries. Each such a segment is either aligned to the axis which is
interpreted as the corresponding entry is set to $0$ or it is set into
a perpendicular to the axis position which is interpreted as the
corresponding entry is set to $1.$ See Fig. 4
%\ref{fig: parts}
for an example.  If the column
modelled by the axis is not active,
the perpendicular segments are placed horizontally. If the column
should be activated then the 
axis rotates $90$ degrees 
%counter clockwise 
so all
the attached perpendicular segments take vertical positions under the axis. 
Symmetrically, if the column is active and it should be deactivated
the axis modelling it is rotated in the reverse direction
by $90$ degrees. Such an
activation/deactivation  of a column is counted as a single operation. There are many
possible ways to implement the rotation of the axis by $90$ degrees
that we omit here. The whole process of activating/deactivating (if necessary)
the $j$-th columns of the
input array such that 
afterwards the $j$-th column is active
if and only if the $j$-th coordinate of the MVP input vector is
set to $1$ 
requires at most $n$ operations of column activations/deactivations  and
scanning the representation of the input MVP vector. For instance, we
may assume that the representation of the input vector is placed
perpendicularly to the column axes close to their start-points so the
whole process of activating/deactivating columns requires $O(n)$
walking and scanning steps besides the $O(n)$ activation/deactivation
operations. In this way, Requirement 4 can be fulfilled.

In order to fulfill Requirement 5 for each row of the MVP input array,
we place a ladder under the segments attached to the axes
corresponding to the row, in the initial state when no column is
active.  The ladders are placed perpendicularly to the axes,
 see Fig. 5, 6.
%\ref{fig: array} (a). 
They can
have some wheels to facilitate their horizontal  movements in the
direction perpendicular to the axes.  
We shall term  the rectangular space between two successive ladder steps
and the arms of the ladder as an opening. Each opening of the ladder
lies exactly under one of the segments attached
to an axis in the initial state when no
column is active. Importantly, the placements and sizes of the axes,
ladders, segments etc.  are so chosen that the rotations of the axes
with attached perpendicular segments are not obstructed by the
ladders.  If a segment is placed vertically after the rotation then it
goes through and a bit sticks out under the corresponding ladder
opening, see Fig. 5.
%\ref{fig: stics}.  
A representation of the MVP output vector is placed along
the rightmost axis. Initially, all coordinates of the output vector are
set to $1.$ In order to set the $i$-th coordinate of the output
vector, the ladder corresponding to the $i$-th row of the input array
is moved from the left to the right at most by the length of one
opening, see Fig. 6.
%\ref{fig: array}.
If no perpendicular segment sticks through any of its
openings such a movement by the length of one opening is possible and
the ladder hits the representation of the output vector in the section
corresponding to its $i$-th coordinate switching its state from $1$ to
$0.$ (Again, there are many possible technical ways of representing
the MVP output vector and implementing such a switch that we omit
here.)  Otherwise, such a full movement is not possible and the state
of the section remains set to $1.$ In effect, the $i$-th coordinate of
the output vector is set to $1$ if and only if the $i$-th row of the
input array includes at least one $1$ belonging to an active column of
the array.  Each such a movement of the ladder with a possible triggering
the switch of the state of the corresponding section is counted as ,say,
$O(1)$ operations. Hence, setting all the coordinates of the output
vector requires $O(n)$ operations and Requirement 5 is fulfilled.

To read a Boolean $n\times n$ matrix $A$ into the MVP input array
one can traverse the array for example in a column-wise
snake fashion and pull out segments corresponding to
the $1$ entries of the matrix $A$ that are
aligned to their axis and push back the pulled out segments 
that correspond to the $0$ entries of the matrix $A.$ 
%To reset the array, we proceed analogously 
%simply pushing back the pulled out segments
%so they are aligned back to their axes.
Thus, we may assume that the reading of a Boolean $n\times n$
matrix in the input array 
%as well as resetting the latter 
requires $O(n^2)$ operations.

Similarly, reading a Boolean $n$-dimensional vector
into the MVP input vector as well as outputting
the MVP output vector
% and resetting the latter
require $O(n)$ operations. 
Finally, to reset the MVP output vector, we need to
pull back the $n$ ladders to the left to their
initial positions and switch all the $0$ coordinates
of the output vector back to $1$ using $O(n)$
operations. It follows that the remaining requirements can be
fulfilled in this way.
%\newpage
\begin{figure}\label{fig: parts}
\begin{center}
\includegraphics[height=3.5cm]{figmek1}
\end{center}
\caption{The basic parts of the mechanical MVP}
\end{figure}

\begin{figure}\label{fig: axis}
\begin{center}
\includegraphics[height=4cm]{figmek3}
\end{center}
\caption{(a) An axis modeling a not active column 
with entries $1,\ 1,\ 0, \ 1.$
(b) In order to activate the column the axis
rotates by 90 degrees.}
\end{figure}

\begin{figure}\label{fig: stics}
\begin{center}
\includegraphics[height=2cm]{figmek4}
\end{center}
\caption{A vertical segment attached to an axis
representing a $1$ entry of the active column
modelled by the axis sticks through an opening
of a ladder. The segment blocks the full movement
of the ladder by the length of one opening to the right.
In result, the coordinate of the MVP output vector
remains set to $1.$}
\end{figure}

 \begin{figure}\label{fig: array}
\begin{center}
\includegraphics[height=6cm]{figmek2}
\end{center}
\caption{(a) The MVP input array with 
columns set to $1101,$ $0100,$ $1001,$
and $0101,$ respectively. The columns
are not active and the ladders are in their
initial positions.
(b) The input array after the activation
of the first and third columns
and the movement of the ladders to the right. }
\end{figure}
\junk{the rotation then it
goes through and a bit sticks out under the corresponding ladder
opening.  A representation of the MVP output vector is placed along
the rightmost axis. Initially, all coordinates of the output vector are
set to $1.$ In order to set the $i$-th coordinate of the output
vector, the ladder corresponding to the $i$-th row of the input array
is moved from the left to the right at most by the length of one
opening. If no perpendicular segment sticks through any of its
openings such a movement by the length of one opening is possible and
the ladder hits the representation of the output vector in the section
corresponding to its $i$-th coordinate switching its state from $1$ to
$0.$ (Again, there are many possible technical ways of representing
the MVP output vector and implementing such a switch that we omit
here.)  Otherwise, such a full movement is not possible and the state
of the section remains set to $1.$ In effect, the $i$-th coordinate of
the output vector is set to $1$ if and only if the $i$-th row of the
input array includes at least one $1$ belonging to an active column of
the array.  Each such a movement of the ladder with possible triggering
switching the state of the corresponding sections are counted as ,say,
$O(1)$ operations. Hence, setting the all coordinates of the output
vector requires $O(n)$ operations and Requirement 5 is fulfilled.
To read a Boolean $n\times n$ matrix $A$ into the MVP input array
one can traverse the array for example in a column-wise
snake fashion and pull out segments corresponding to
the $1$ entries of the matrix $A$ that are
aligned to their axis and push back the pulled out segments 
that correspond to the $0$ entries of the matrix $A.$ 
%To reset the array, we proceed analogously 
%simply pushing back the pulled out segments
%so they are aligned back to their axes.
Thus, we may assume that the reading of a Boolean $n\times n$
matrix in the input array 
%as well as resetting the latter 
requires $O(n^2)$ operations.

Similarly, reading a Boolean $n$-dimensional vector
into the MVP input vector as well as outputting
the MVP output vector
% and resetting the latter
require $O(n)$ operations. 
Finally, to reset the MVP output vector, we need to
pull back the $n$ ladders to the left to their
initial positions and switch all the $1$ coordinates
of the output vector back to $1$ using $O(n)$
operations. It follows that the remaining requirements can be
fulfilled in this way.}

\subsection{A parallel mechanical MVP}

In fact, the activations/deactivations of the appropriate columns of
the MVP input array as well as setting the values of the coordinates
of the MVP output vector can be done in parallel.

It is not difficult to design a bit more complex mechanical MVP where
an appropriate representation of the MVP input vector is moved towards
the start-points of the axes and the $1$s in the representation hit
the start-pints triggering the activation of the corresponding columns
of the input array.  More exactly, if the $j$-th coordinate of the
input vector is set to $1$ then the representation of this $1$ hits the
start-point of the axis modelling the $j$-th column casing its
activation.  Hence, the activations of all the columns can be done in
one step of the movement of the representation of the input vector.
In order to fulfill Requirement 4, we need to deactivate all columns
first before the activation movement.  The deactivation of all the
columns can be obtained by the reverse movement of the representation
of the previous input vector.

As for setting the output vector, i.e., Requirement 5, all the $n$
movements of the ladders to the right can be done
independently of each other in one parallel step.  (Still, if one
would like to count the ladder movements as a single operation, one
could elevate slightly the whole left part of the processor to cause
sliding the ladders to the right at most by one opening length etc.)  The
same holds for the reverse movements of the ladders back to the left,
i.e., resetting the output vector.  Finally, the reading of an input
vector into the MVP input vector can be also done in a single
parallel step using $O(n)$ operations
while the reading of the input matrix into the input array 
can be done in $O(n)$ parallel steps, each using $O(n)$ operations.

Hence, we obtain the following lemma and theorem.

\begin{lemma}
By using the parallel mechanical MVP,
the procedure $Matvec(A,V)$ can be implemented
in $O(1)$ parallel steps using $O(n)$ operations.
\end{lemma}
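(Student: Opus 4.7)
The plan is to revisit the two loops in the procedure $Matvec(A,V)$ (Figure 1) and argue that, under the parallel mechanical MVP described just above the lemma, each loop collapses into a constant number of parallel mechanical steps while the total count of elementary operations stays at $O(n)$. The correctness of the resulting computation is unchanged from Lemma~\ref{lem: matvec}, so the real content is bounding the number of parallel steps.

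First I would handle the column activation loop (lines 1--4 of $Matvec$), which in the sequential analysis used Requirement 4. Here I would invoke the parallel activation mechanism described in Subsection 4.1: one reverse sweep of the previous input-vector representation deactivates every column in a single parallel step, and one forward sweep of the current input-vector representation simultaneously strikes the start-points of exactly those axes whose coordinate is $1$, activating them in a single parallel step. Since each column is touched at most a constant number of times during these two sweeps, the total number of elementary activation/deactivation operations is $O(n)$, yet the loop itself executes in $O(1)$ parallel steps.

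Next I would handle the output-setting loop (lines 5--7 of $Matvec$), which previously used Requirement 5 sequentially. In the parallel MVP, all $n$ ladders are independent and can be pushed one opening to the right simultaneously, e.g.\ by the global tilt trick mentioned in the text. Each ladder either meets an obstruction (leaving the corresponding output coordinate at $1$) or completes the move (switching it to $0$); the switching mechanisms on the $n$ sections of the output vector fire in parallel. This amounts to $O(n)$ elementary operations executed in $O(1)$ parallel steps.

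Combining the two loops, $Matvec(A,V)$ runs in a constant number of parallel phases while performing $O(n)$ total mechanical operations, which is the claim. The only delicate point, and the place I would spend the most care in writing up, is justifying that the ``simultaneous'' mechanical actions (the sweep over axis start-points, and the joint ladder motion) are legitimately counted as $O(1)$ parallel steps under the cost model of Section~2; once one accepts the conventions of Subsection~4.1 this is routine, but it is the only substantive obstacle, since the per-operation counts are already inherited directly from Lemma~\ref{lem: matvec}.
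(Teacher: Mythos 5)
Your proposal is correct and follows essentially the same route as the paper, which justifies this lemma entirely by the preceding discussion in the subsection on the parallel mechanical MVP: the reverse/forward sweeps of the input-vector representation collapse the activation loop into $O(1)$ parallel steps, and the simultaneous ladder movements collapse the output-setting loop likewise, with correctness inherited from Lemma~\ref{lem: matvec}. No substantive difference from the paper's (implicit) argument.
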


\begin{theorem}
By using the parallel mechanical MVP,
the Boolean product of two Boolean $n\times n$ 
matrices can be computed
in $O(n)$ parallel steps, each using $O(n)$ operations.
\end{theorem}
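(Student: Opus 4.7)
The plan is to re-run Algorithm~1 with every MVP primitive replaced by its parallel counterpart, and then tally the cost in the two-dimensional measure (parallel steps $\times$ operations per step).

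First I would handle the one-time preprocessing: by the discussion preceding the statement, the input matrix $A$ can be loaded into the MVP input array in $O(n)$ parallel steps, each of which performs $O(n)$ mechanical operations (one parallel step per column, doing all $n$ entries of that column in parallel). This already saturates the claimed budget, so the remainder of the argument only needs to show that the main loop does not exceed it.

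Next I would analyse one iteration of the outer \textbf{for} loop of Algorithm~1. Reading the column $B[*,j]$ into the MVP input vector is, by the paragraph on the parallel MVP, a single parallel step using $O(n)$ operations. The call $Matvec(A,B[*,j])$ is, by the preceding lemma, $O(1)$ parallel steps using $O(n)$ operations. Reporting the output vector and resetting it to the initial state are, by Requirements~3 and~6 together with the parallel implementation of the ladder movements, again $O(1)$ parallel steps with $O(n)$ operations each. Thus one iteration costs $O(1)$ parallel steps and $O(n)$ operations.

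Finally I would combine: there are $n$ iterations, contributing $O(n)$ parallel steps and $O(n)$ operations per step, plus the $O(n)$ preprocessing steps (also $O(n)$ operations each). Summing gives the advertised $O(n)$ parallel steps with $O(n)$ operations per step. Correctness is inherited verbatim from the sequential theorem, since the parallel MVP realises exactly the same input/output specification as the sequential one. The only place where one has to be a little careful is ensuring that the iterations are indeed executed sequentially so that the $O(1)$-step cost of $Matvec$ does not need to be multiplied by $n$ in the parallel dimension; this is immediate because each iteration writes into (and later resets) the shared MVP output vector, so we keep the outer loop sequential and only exploit parallelism inside each iteration. I do not anticipate a real obstacle here—the whole argument is essentially bookkeeping on top of the previous lemma.
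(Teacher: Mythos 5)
Your proposal is correct and matches the paper's (implicit) argument: the paper derives the theorem directly from the preceding discussion, which establishes that matrix loading takes $O(n)$ parallel steps of $O(n)$ operations each and that each of the $n$ iterations (column read, parallel $Matvec$, reset) costs $O(1)$ parallel steps of $O(n)$ operations. Your bookkeeping, including the remark that the outer loop stays sequential while parallelism is exploited only within each iteration, is exactly the intended reasoning.
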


\section{An alternative implementation of MVP}

\begin{figure}\label{fig: wall}
\begin{center}
\includegraphics[height=6cm]{figmek5}
\end{center}
\caption{(a) A thin wall modeling a not active column
of the MVP input array
with entries $1,\ 1,\ 0, \ 1.$
To activate the column the wall has to be moved down.
(b) The second and fourth columns of the MVP input array are active.}
\end{figure}

Here, we shall outline an alternative implementation
of  MVP using mechanical and light operations.
Each column of the MVP input $n\times n$ array is modelled
by a thin movable wall vertically divided into $2n$
sections. Each even numbered section has an always open
round window of the same size in its middle. The $2k-1$-st
section models the $k$-th entry
of the column in the input array. It has also
a window of the same shape and size in the middle.
The window can be either open or closed. The former is
interpreted as setting the $k$-th entry in the column
to $0$ while the latter as setting this entry to $1.$
Now, the activation of a column in the input array
consists in shifting the corresponding wall by
the length of a single section down.
See Fig. 7
%\ref{fig: wall}
for an example.
The deactivation of the column is obtained
by the reverse move of the wall up.
Both moves are counted as single  operations.
Hence, arguing similarly as in the previous section,
we infer that Requirement 4 can be fulfilled.

To fulfill Requirement 5, lights are placed in front of even sections
of the first wall in its initial deactivated position. 
See Fig. 7.
% \ref{fig: wall}. 
The distance
between two consecutive walls is very small.  Such an $i$-th light can
be observed on the other side at the last wall through the windows of
walls modelling deactivated columns and the open windows of walls
modelling active columns if and only if the $i$-th row of the input
array does not contain any $1$ belonging to an active column. Thus, in
case the light is observed the $i$-th coordinate of the MVP output
vector is set to $0$ and otherwise it is set to $1.$ Since the
observation of the light is counted as a single observation,
Requirement 5 can be also fulfilled. The fulfillment of the remaining
MVP requirements do not require any involved ideas and can be achieved
in many ways so we omit its discussion.

\section{Final remarks}
The 
mechanical operations used in our method 
like turning or shifting mechanical units
are very slow compared to the electronic ones.
Also, it would be technically hard to apply the aforementioned
 mechanical operations
to very large matrices. For these reasons, in the range of matrix
sizes feasible for the mechanical operations, Boolean matrix
multiplication can be performed much faster by electronic computers,
even if they run the straightforward cubic-time algorithm.'
Still, it seems of at least theoretical interest that in the realm
of mechanical computations one can design an algorithm for
Boolean matrix multiplication using a quadratic number of operations
in contrast to the straightforward one using a cubic number
of operations.  
The two ideas behind our mechanical algorithm for
Boolean matrix multiplication are switching on/off
all entries of an array column in a single mechanical step   
and computing the disjunction of switched on entries
of an array row in a single mechanical step. It would be interesting to
study the complexity of other matrix/graph problems
assuming that the aforementioned operations can be performed
in single steps.                           

\subsection*{Acknowledgments}
%\par
%\vskip 3pt
%\noindent
%\textbf{Acknowledgments}
%The authors are thankful to  
%anonymous referees for valuable comments.
The research has been
supported in part by 
Swedish Research Council grant  621-2017-03750.
\junk{
\section*{Acknowledgments}
We thank for valuable comments.

\begin{figure}\label{fig: wall}
\begin{center}
\includegraphics[height=6cm]{figmek5}
\end{center}
\caption{(a) A thin wall modeling a not active column
of the MVP input array
with entries $1,\ 1,\ 0, \ 1.$
To activate the column the wall has to be moved down.
(b) The second and fourth columns of the MVP input array are active.}
\end{figure}}

\vfill
%\newpage

{\small
}


\begin{thebibliography}{99}

\bibitem{AV18}
J. Alman and V. Vassilevska Williams.
\newblock
{\em Limits on All Known (and Some Unknown) Approaches to Matrix Multiplication}.
\newblock
Proc. FOCS 2018.

\junk{
\bibitem{AHU74}
A.V. Aho, J.E. Hopcroft and J. Ullman.
\newblock
{\em The Design and Analysis of Computer Algorithms}.
\newblock Addison-Wesley Publishing Company, Reading, 1974.

\bibitem{AF}
N. Alon and S. Friedland.
\newblock
{\em The maximum number of perfect matchings in graphs with
a given degree sequence.}
\newblock
The Electronic Journal of Combinatorics 15 (2008), No. 13.

\bibitem{AGM}
N.~Alon, Z.~Galil and O.~Margalit.
\newblock
{\em On the exponent of all pairs shortest path problem.}
\newblock J. Comput. System Sci. , 54 (1997), pp. 25-51.}
%\bibitem{BMS85}
%A. Bertoni, G. Mauri and N. Sabadini.
%\newblock {\em Simulations among classes of random access machines
%and equivalence among numbers succinctly represented}
%\newblock Ann. Discrete Math. 1985, pp.~65--90  
\bibitem{BW12}
N. Bansal and R. Williams. 
{\em Regularity Lemmas and Combinatorial Algorithms.}
Theory of Computing, Vol. 8, No. 1, pp. 69-94, 2012.

\junk{
\bibitem{BC14}
D. Bremner, T.M. Chan, E.D. Demaine, J. Erickson,
F. Hurtado, J. Iacono, S. Langerman, M. Patrascu, and
P. Taslakian.
\newblock {\em Necklaces, Convolutions and X+Y }
\newblock Algorithmica (2014) 69, pp. 294-314.

\bibitem{CW}
D. Coppersmith, S. Winograd.
\newblock {\em Matrix Multiplication via Arithmetic Progressions}.
\newblock J. of Symbolic Computation 9, 251--280 (1990)

%\bibitem{EB90}
%P. van Emde Boas
%\newblock {\em Machine Models and Simulation.}
%\newblock Handbook of Theoretical Computer Science,
%Volume A: Algorithms and Complexity (A) 1990, pp.~1--66.

\bibitem{FP74}
M.J. Fisher and M. S. Paterson.
\newblock {\em String-matching and other products}.
\newblock  Proc. 7th SIAM-AMS Complexity of Computation, pp. 113--125 (1974).
}
\bibitem{F77}
R. Freivalds.
\newblock
{\em Probabilistic Machines Can Use Less Running
Time}.
\newblock
IFIP Congress 1977, pp. 839--842.
\junk{
\bibitem{HS74}
J. Hartmanis and J. Simon.
\newblock {\em On the Power of Multiplication in Random Access Machines.}
\newblock Proc. SWAT (FOCS) 1974, pp.~13--23.}

\bibitem{KW14}
I. Korec and J. Widermann.
\newblock
{\em Deterministic Verification of Integer Matrix Multiplication in Quadratic Time.}
\newblock
SOFSEM 2014: Theory and Practice of Computer Science,
Lecture Notes in Computer Science Volume 8327, pp 375-382, 2014.

\bibitem{LeGallisaac12}
F. Le Gall.
\newblock
{\em A Time-Efficient Output-Sensitive Quantum Algorithm for Boolean Matrix Multiplication. }
ISAAC 2012, PP. 639--648, 2012.

\bibitem{LG14}
F. Le Gall.
\newblock 
{\em Powers of Tensors and Fast Matrix Multiplication}.
\newblock
39th  International Symposium on Symbolic and Algebraic 
Computation, July 2014.

\bibitem{PS16}
A. Lingas, M. Persson, and D. Sledneu.
\newblock 
{\em Bounds for Semi-disjoint Bilinear Forms in a Unit-cost Computational Model.}
\newblock
Proc. of the 13th annual conference on Theory and Applications of Models of Computation 
(TAMC 2016/2017), Lecture Notes in Computer Science, pp. 412-424, April 2017..


\bibitem{MG76}
K. Mehlhorn and Z. Galil.
\newblock {\em Monotone Switching Circuits and Boolean Matrix
  Product.}
\newblock Computing 16, pp. 99-111, 1976.
\junk{
\bibitem{Pa13}
R. Pagh 
\newblock
{\em Compressed matrix multiplication.}
\newblock
ACM Transactions on Computation Theory 5(3): 9 (2013).
}

\bibitem{Pa75}
M. Paterson.
\newblock
{\em Complexity of Monotone Networks for Boolean Matrix Product}.
\newblock
Theor. Comput. Sci. 1(1), pp. 13-20 (1975)

\bibitem{P75}
V.R. Pratt.
\newblock
{\em  The Power of Negative Thinking in Multiplying Boolean Matrices.}
\newblock
 SIAM J. Comput. 4(3), pp. 326-330, 1975.	
\junk{
\bibitem{R02}	
R. Raz.
\newblock
{\em On the complexity of matrix product.}
\newblock
Proc. STOC 2002, pp. 144-151.

\bibitem{R80}
F. Romani.
\newblock
{\em Shortest path problem is not harder than matrix multiplications.}
\newblock
Information Processing Letters vol. 4(6), pp. 134-136, 1980.

\bibitem{S76}
C.-P. Schnorr.
\newblock
{\em  A Lower Bound on the Number of Additions in
Monotone Computations.} 
\newblock
Theor. Comput. Sci. 2(3), pp. 305-315, 1976.

\bibitem{S79}
A. Sch\"onhage.
\newblock {\em On the Power of Random Access Machines.}
\newblock ICALP 1979, pp.~520--529.}

\bibitem{S69}
V. Strassen.
\newblock
{\em Gaussian elimination is not optimal}.
\newblock
Numerische Mathematik 13, pp. 354-356, 1969.

\bibitem{Vassilevska12}
V.~Vassilevska Williams.
\newblock {\em Multiplying matrices faster than Coppersmith-Winograd}.
\newblock Proc. STOC 2012, pp. ~887--898. 

\bibitem{VWW}
Vassilevska Williams, V.  and Williams, R.:
Subcubic Equivalences Between Path, Matrix, and Triangle Problems,
J. ACM, Vol. 65(5), September 2018 (preliminary version FOCS 2010)

\junk{
\bibitem{We}
I. Wegener.
\newblock
{\em The Complexity of Boolean Functions}.
\newblock
Wiley-Teubner Series in Computer Science, 
New York, Stuggart, 1987.}

\bibitem{W14}
J. Wiedermann
\newblock {\em Fast Nondeterministic Matrix Multiplication
via Derandomization of Freivalds' Algorithm.}
\newblock IFIP TCS 2014, pp.~123--135

\bibitem{RW}
R. Williams,
{\em Matrix-vector multiplication in sub-quadratic time (some preprocessing required).}
Proc. of SODA 2007, pp. 995-2001.

\bibitem{Wik}
{\em The history of computing hardware}.
\newblock
Wikipedia, 21 November 2019.
\junk{
\bibitem{Y76}
G. Yuval.
\newblock
{\em An algorithm for finding all shortest paths using 
$N^{2.81}$ infinite-precision multiplication.}
\newblock
Information Processing Letters vol. 11(3), pp. 155-156 , 1976.}

\bibitem{Yu15}
H. Yu.
\newblock
{\em An Improved Combinatorial Algorithm for Boolean Matrix Multiplication.}
\newblock
Proc. of ICALP 2015. pp 1094-1105.

\bibitem{Z02}
U.~Zwick.
\newblock {\em All pairs shortest paths using bridging sets and
rectangular matrix multiplication.}
\newblock Journal of the ACM, 49(3), pp. 289-317, 2002.
\end{thebibliography}
\end{document}